\begin{document}

\title*{On a pair of difference equations for the $_4F_3$ type orthogonal polynomials and related exactly-solvable quantum systems}
\titlerunning{A pair of difference equations for the $_4F_3$ type orthogonal polynomials}
\author{E.I. Jafarov, N.I. Stoilova and J. Van der Jeugt}
\institute{E.I. Jafarov \at Institute of Physics, Azerbaijan National Academy of Sciences, Javid av. 131, AZ-1143 Baku, Azerbaijan, \email{ejafarov@physics.ab.az}
\and N.I. Stoilova \at Institute for Nuclear Research and Nuclear Energy, Bulgarian Academy of Sciences, Boul. Tsarigradsko Chaussee 72, 1784 Sofia, Bulgaria \email{stoilova@inrne.bas.bg}
\and J. Van der Jeugt \at Department of Applied Mathematics, Computer Science and Statistics, University of Ghent, Krijgslaan 281-S9, B-9000 Gent, Belgium \email{Joris.VanderJeugt@UGent.be}}
\maketitle

\abstract{
We introduce a pair of novel difference equations, whose solutions are expressed in terms of Racah or Wilson polynomials depending on the nature of the finite-difference step. 
A number of special cases and limit relations are also examined, which allow to introduce similar difference equations for the orthogonal polynomials of the $ _3 F_2$ and $ _2 F_1$ types. 
It is shown that the introduced equations allow to construct new models of exactly-solvable quantum dynamical systems, such as spin chains with a nearest-neighbour interaction and fermionic quantum oscillator models. 
}

\section{Introduction}
\label{int}

The importance of orthogonal polynomials in the study of quantum dynamical systems is undisputable. 
Without the knowledge of basic properties of orthogonal polynomials, it is impossible to comprehend the existence of explicit solutions of quantum systems such as the quantum harmonic oscillator, the Coulomb problem or Heisenberg spin chains. 
A long time ago, different types of orthogonal polynomials were studied separately.
Then the idea grew that some of them are special case of others, and that they can be generalized.
Thus the discovered polynomials could be unified in a table, each having its own level and cell in that table. 
This table is called the Askey scheme of hypergeometric orthogonal polynomials. 
The importance of this table is that it gathers all polynomials, some of them satisfying an orthogonality relation in the continuous space and others in a discrete space, some with a finite support and others with an infinite support~\cite{koekoek}.

Hermite polynomials are the most attractive ones from the Askey scheme, because they have no free parameters and occupy the lowest level of the table, that is the level where there is no sign of the discreteness of the space. 
They are well known as the explicit solution of the 1D non-relativistic quantum harmonic oscillator in a canonical approach~\cite{landau}. 
The dynamical symmetry of this quantum system is also well known and it is the Heisenberg-Weyl algebra. 
This algebra can be easily constructed by using the three-term recurrence relations of Hermite polynomials. 
If, as a next step, one drops the canonical commutation relation between position and momentum operator $\left[ {\hat p,\hat x} \right] =  - i$~\cite{wigner}, then one observes the very interesting behaviour of the solution of the 1D non-relativistic quantum harmonic oscillator. 
Now the solution is expressed in terms of the generalized Laguerre polynomials, and the dynamical symmetry of the system is the Lie superalgebra $osp\left( {1|2} \right)$.
It is constructed by using two kind of three-term recurrence relations of the generalized Laguerre polynomials, which are intertwined.
The existence of more than one recurrence relations for these polynomials has the following explanation. 
Laguerre polynomials occupy the next level in the Askey scheme: they generalize Hermite polynomials and have one free parameter.
This parameter allows to separate the recurrence relations for even and odd polynomials, and thus obtain the new form of the recurrence relations for generalized Laguerre polynomials, which leads to the quite interesting so called non-canonical solution of the 1D non-relativistic quantum harmonic oscillator~\cite{ohnuki}. 
It is known that such a method can also be applied to polynomials from the next levels of the Askey scheme, and similar recurrence relations exist for continuous dual Hahn polynomials~\cite{groenevelt}, generalizing both Meixner-Pollaczek and Laguerre polynomials.
Their application allows to construct a new model of the quantum harmonic oscillator, whose algebra is the Lie algebra $su(1,1)$ deformed by a reflection operator~\cite{jafarov1}. 
A similar approach in finite-discrete configuration space leads to the new difference equations (or recurrence relations) for the Hahn or dual Hahn polynomials and they generalize the difference equation for Krawtchouk polynomials (due to duality of Krawtchouk polynomials, the difference equation can be transformed to the three-term recurrence relation). 
Application of such recurrence relations leads to two very interesting quantum mechanical solutions, one of which is a finite-discrete quantum oscillator model based on the Lie algebra $u(2)$ extended by a parity operator~\cite{jafarov2} and other one is the case of perfect state transfer over the spin chain of fermions with a nearest-neighbour interaction under absence of the external magnetic field~\cite{stoilova}.

In current work, we continue this procedure and report on the pairs of three-term difference equations and recurrence relations for the Racah and Wilson polynomials, which occupy the top level of the Askey scheme and generalize all discrete and continuous orthogonal polynomials from this table. 
We also discuss some special cases, when new three-term difference equations exist also for Hahn polynomials and they lead to a pair of difference equations for the continuous Hahn polynomials.

\section{Racah polynomials and new three-term recurrence relations}
\label{racah}

The Racah polynomial $R_n \left( {\lambda \left( x \right);\alpha ,\beta ,\gamma ,\delta } \right)$ of degree $n$ ($n = 0,1, \ldots ,m$) in the variable $x$ is defined by:
\begin{equation}
\label{racah-def}
R_n \left( {\lambda \left( x \right);\alpha ,\beta ,\gamma ,\delta } \right) = {\kern 1pt} _4 F_3 \left( {\begin{array}{*{20}c}
   {\begin{array}{*{20}c}
   { - n,n + \alpha  + \beta  + 1, - x,x + \gamma  + \delta  + 1}  \\
   {\alpha  + 1,\beta  + \delta  + 1,\gamma  + 1}  \\
\end{array};} & 1  \\
\end{array}} \right),
\end{equation}
where $\lambda \left( x \right) = x\left( {x + \gamma  + \delta  + 1} \right)$ and $\alpha  + 1=-m$ or $\beta  + \delta  + 1=-m$ or $\gamma  + 1=-m$, with $m$ being a nonnegative integer.

They satisfy a finite-discrete orthogonality relation of the following form:
\begin{equation}
\label{racah-ort}
\sum\limits_{x = 0}^m {w\left( x \right)R_l \left( {\lambda \left( x \right);\alpha ,\beta ,\gamma ,\delta } \right)R_n \left( {\lambda \left( x \right);\alpha ,\beta ,\gamma ,\delta } \right)}  = h_n \delta _{ln} ,
\end{equation}
where
\begin{equation}
\label{racah-weight}
w\left( x \right) = \frac{{\left( {\alpha  + 1} \right)_x \left( {\beta  + \delta  + 1} \right)_x \left( {\gamma  + 1} \right)_x \left( {\gamma  + \delta  + 1} \right)_x \left( {\left( {\gamma  + \delta  + 3} \right)/2} \right)_x }}{{\left( { - \alpha  + \gamma  + \delta  + 1} \right)_x \left( { - \beta  + \gamma  + 1} \right)_x \left( {\left( {\gamma  + \delta  + 1} \right)/2} \right)_x \left( {\delta  + 1} \right)_x x!}},
\end{equation}
\begin{equation}
\label{racah-norm}
h_n  = M \cdot \frac{{\left( {n + \alpha  + \beta  + 1} \right)_n \left( {\alpha  + \beta  - \gamma  + 1} \right)_n \left( {\beta  + 1} \right)_n n!}}{{\left( {\alpha  + \beta  + 2} \right)_{2n} \left( {\alpha  + 1} \right)_n \left( {\beta  + \delta  + 1} \right)_n \left( {\gamma  + 1} \right)_n }},
\end{equation}
and with multiplier $M$ being defined as
\[
M = \left\{ \begin{array}{l}
 \frac{{\left( { - \beta } \right)_m \left( {\gamma  + \delta  + 2} \right)_m }}{{\left( { - \beta  + \gamma  + 1} \right)_m \left( {\delta  + 1} \right)_m }}\quad \quad \;\;\textrm{if}\quad \alpha  + 1 =  - m \\ 
 \frac{{\left( { - \alpha  + \delta } \right)_m \left( {\gamma  + \delta  + 2} \right)_m }}{{\left( { - \alpha  + \gamma  + \delta  + 1} \right)_m \left( {\delta  + 1} \right)_m }}\quad \;\;\textrm{if}\quad \beta  + \delta  + 1 =  - m \\ 
 \frac{{\left( {\alpha  + \beta  + 2} \right)_m \left( { - \delta } \right)_m }}{{\left( {\alpha  - \delta  + 1} \right)_m \left( {\beta  + 1} \right)_m }}\quad \quad \quad \textrm{if}\quad \gamma  + 1 =  - m. \\ 
 \end{array} \right.
\]

Then, one can introduce a pair of new difference equations for these polynomials in which Racah polynomials of the same degree $n$ in variables $x$ or $x+1$, and with parameters of type $(\alpha+1,\beta-1,\delta)$ and $(\alpha,\beta,\delta-1)$ are intertwined.

\begin{proposition}
The Racah polynomials satisfy the following difference equations:
\begin{eqnarray}
&& \frac{{\left( {x + \gamma  + 1} \right)\left( {x + \beta  + \delta } \right)}}{{2x + \gamma  + \delta  + 1}}R_n \left( {\lambda \left( {x + 1} \right);\alpha ,\beta ,\gamma ,\delta  - 1} \right) - \frac{{\left( {x - \beta  + \gamma  + 1} \right)\left( {x + \delta } \right)}}{{2x + \gamma  + \delta  + 1}} \nonumber \\ 
&&	\times R_n \left( {\lambda \left( x \right);\alpha ,\beta ,\gamma ,\delta  - 1} \right)= \frac{{\left( {n + \alpha  + 1} \right)\left( {n + \beta } \right)}}{{\alpha  + 1}}R_n \left( {\lambda \left( x \right);\alpha  + 1,\beta  - 1,\gamma ,\delta } \right), \nonumber \\ 
&& \label{racah-eq1} \\
&& \frac{{\left( {x + \alpha  + 2} \right)\left( {x + \gamma  + \delta + 1} \right)}}{{2x + \gamma  + \delta  + 2}}R_n \left( {\lambda \left( {x + 1} \right);\alpha  + 1,\beta  - 1,\gamma ,\delta } \right) \nonumber \\
&&\qquad - \frac{{\left( {x + 1} \right)\left( {x - \alpha  + \gamma  + \delta } \right)}}{{2x + \gamma  + \delta  + 2}} 
R_n \left( {\lambda \left( x \right);\alpha  + 1,\beta  - 1,\gamma ,\delta } \right) \nonumber \\ 
  \label{racah-eq2} 
&&  = \left( {\alpha  + 1} \right)R_n \left( {\lambda \left( {x + 1} \right);\alpha ,\beta ,\gamma ,\delta  - 1} \right). 
\end{eqnarray}
\end{proposition}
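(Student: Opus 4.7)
The natural strategy is a direct verification using the explicit ${}_4F_3$ series representation (\ref{racah-def}). A useful preliminary observation is the elementary polynomial identity
$$
(x+\gamma+1)(x+\beta+\delta) - (x-\beta+\gamma+1)(x+\delta) = \beta\,(2x+\gamma+\delta+1),
$$
which shows that the denominator $2x+\gamma+\delta+1$ appearing on the LHS of (\ref{racah-eq1}) divides the combination in the numerator, so both sides are genuine polynomials in $x$ of degree $2n$ (equivalently, degree $n$ in $\lambda(x)$). As a sanity check one can evaluate (\ref{racah-eq1}) at $x=0$: using $R_n(\lambda(0);\cdot)=1$ and the one-term truncation of $R_n(\lambda(1);\cdot)$ coming from $(-1)_k=0$ for $k\geq 2$, the LHS reduces to $\beta + n(n+\alpha+\beta+1)/(\alpha+1)$, which factorizes as $(n+\alpha+1)(n+\beta)/(\alpha+1)$, matching the RHS.

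To carry out the full proof of (\ref{racah-eq1}) I would substitute the series definition into both sides and compare. The LHS contains two distinct Pochhammer bases, $(-x-1)_k(x+\gamma+\delta+1)_k$ coming from $R_n(\lambda(x+1);\alpha,\beta,\gamma,\delta-1)$ and $(-x)_k(x+\gamma+\delta)_k$ coming from $R_n(\lambda(x);\alpha,\beta,\gamma,\delta-1)$, whereas the RHS involves only $(-x)_k(x+\gamma+\delta+1)_k$. Using the elementary ratios $(-x-1)_k/(-x)_k = (x+1)/(x-k+1)$ and $(x+\gamma+\delta+1)_k/(x+\gamma+\delta)_k = (x+\gamma+\delta+k)/(x+\gamma+\delta)$, the LHS can be brought onto a common Pochhammer base. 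Once the rational prefactors have been combined with $(2x+\gamma+\delta+1)^{-1}$ and the resulting polynomial factors in $x$ expanded, a reindexing of the sum --- exploiting the Saalschützian (1-balanced) character of the Racah ${}_4F_3$ --- collapses the combination to a single ${}_4F_3$ with the shifted parameter set $(\alpha+1,\beta-1,\gamma,\delta)$ and produces the overall prefactor $(n+\alpha+1)(n+\beta)/(\alpha+1)$ of the RHS. The main technical obstacle is precisely this reindexing step: term-by-term matching in $k$ fails (the $k=0$ calculation already shows that the LHS contributes only $\beta$, while the RHS contributes $(n+\alpha+1)(n+\beta)/(\alpha+1)$), so the equality arises only after cross-contributions between successive $k$-values have been correctly regrouped.

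The identity (\ref{racah-eq2}) is then handled by an entirely analogous direct argument, with the roles of the two parameter sets exchanged. A conceptually appealing alternative is to observe that (\ref{racah-eq1}) and (\ref{racah-eq2}) together effect a Darboux-type factorization of the standard second-order difference operator for Racah polynomials: the product of the rational prefactors $[(x+\gamma+1)(x+\beta+\delta)/(2x+\gamma+\delta+1)]\cdot[(x+\alpha+2)(x+\gamma+\delta+1)/(2x+\gamma+\delta+2)]$ reproduces the coefficient $B(x)$ in the second-order Racah difference equation with parameters $(\alpha+1,\beta-1,\gamma,\delta)$, and the product $[(x-\beta+\gamma+1)(x+\delta)/(2x+\gamma+\delta+1)]\cdot[(x+1)(x-\alpha+\gamma+\delta)/(2x+\gamma+\delta+2)]$ (with the appropriate shift) reproduces the companion coefficient $D(x)$. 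Consequently, once (\ref{racah-eq1}) is established, (\ref{racah-eq2}) can be obtained by composing the two operators and invoking the known second-order difference equation for the Racah polynomials, giving a useful independent cross-check on the calculation.
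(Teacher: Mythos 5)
Your strategy coincides with the paper's: substitute the $_4F_3$ series into both sides, bring everything onto a common Pochhammer base, and regroup contributions between adjacent values of $k$. Your preliminary observations are sound --- the identity $(x+\gamma+1)(x+\beta+\delta)-(x-\beta+\gamma+1)(x+\delta)=\beta(2x+\gamma+\delta+1)$ is exactly the splitting the paper uses to reach its intermediate form (\ref{racah-eq11}), and your evaluation at $x=0$ is correct. The gap is that the decisive step is described rather than performed, and for an identity of this type that verification \emph{is} the proof. Concretely, after both sides are written over the common factor $(-n)_k(n+\alpha+\beta+1)_k(-x)_{k-1}(x+\gamma+\delta+1)_{k-1}/\bigl((\alpha+1)_k(\beta+\delta)_k(\gamma+1)_k\,k!\bigr)$, what must be checked is that the bracketed polynomial coefficients agree for every $k$; in the paper this reduces to
\begin{eqnarray*}
&&\beta(k-x-1)(x+\gamma+\delta)-k(x+\gamma+1)(x+\beta+\delta)\\
&&\quad =(\beta+k)(k-x-1)(x+\gamma+\delta+k)-k(\gamma+k)(\beta+\delta+k-1),
\end{eqnarray*}
where the right-hand side is obtained from the RHS of (\ref{racah-eq1}) by writing $(n+\alpha+1)(n+\beta)/(\alpha+k+1)=(n+\alpha+\beta+k+1)(n-k)/(\alpha+k+1)+(\beta+k)$ and shifting $k\to k+1$ in the first piece --- this is your ``cross-contribution between successive $k$'', made explicit. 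Saying that ``a reindexing collapses the combination'' asserts precisely what needs proof; moreover the Saalsch\"utzian character of the series plays no role here --- no summation theorem is invoked, only a coefficientwise contiguous relation.

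Two further points. First, polynomiality of the LHS does not follow from your quoted identity alone: you also need that at the zero $x_0=-(\gamma+\delta+1)/2$ of the denominator one has $(x_0+1)(x_0+\gamma+\delta+1)=x_0(x_0+\gamma+\delta)$, so that the two Racah values (with parameter $\delta-1$) coincide there and the numerator vanishes. Second, the Darboux-type route to (\ref{racah-eq2}) is workable in principle but needs more care than stated: composing the two operators on $R_n(\lambda(x);\alpha,\beta,\gamma,\delta-1)$ yields eigenvalue $(n+\alpha+1)(n+\beta)=n(n+\alpha+\beta+1)+\beta(\alpha+1)$ and returns the polynomial evaluated at $x+1$ rather than at $x$, so the operator identity to be verified is the textbook second-order equation only up to a constant shift and a translation; establishing that identity is comparable in labour to the direct proof. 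In fact (\ref{racah-eq2}) is the easier of the two: as the paper shows, it reduces term by term (no reindexing needed) to the elementary identity $(x-\alpha+\gamma+\delta)(k-x-1)+(x+\alpha+2)(x+\gamma+\delta+k+1)=(2x+\gamma+\delta+2)(\alpha+k+1)$.
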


\begin{proof}
We prove both equations by performing straightforward computations using known properties of hypergeometric functions and Pochhammer symbols. 
In the case of (\ref{racah-eq1}), one can rewrite the left-hand side in the following form:
\begin{eqnarray}
&& \left( {x + \gamma  + 1} \right)R_n \left( {\lambda \left( {x + 1} \right);\alpha ,\beta ,\gamma ,\delta  - 1} \right) - \left( {x - \beta  + \gamma  + 1} \right)R_n \left( {\lambda \left( x \right);\alpha ,\beta ,\gamma ,\delta  - 1} \right) \nonumber \\ 
&&  + \frac{{\left( {x + \gamma  + 1} \right)\left( {x - \beta  + \gamma  + 1} \right)}}{{2x + \gamma  + \delta  + 1}} [ R_n \left( {\lambda \left( x \right);\alpha ,\beta ,\gamma ,\delta  - 1} \right) \nonumber \\
&& \qquad\qquad - R_n \left( {\lambda \left( {x + 1} \right);\alpha ,\beta ,\gamma ,\delta  - 1} \right) ]. \label{racah-eq11}
\end{eqnarray}
Then, a simple computations show that
\begin{eqnarray}
&& \left( {x + \gamma  + 1} \right)R_n \left( {\lambda \left( {x + 1} \right);\alpha ,\beta ,\gamma ,\delta  - 1} \right) - \left( {x - \beta  + \gamma  + 1} \right)R_n \left( {\lambda \left( x \right);\alpha ,\beta ,\gamma ,\delta  - 1} \right) \nonumber \\ 
&&  =  - \sum\limits_{k = 0}^n {\frac{{\left( { - n} \right)_k \left( {n + \alpha  + \beta  + 1} \right)_k \left( { - x} \right)_{k - 1} \left( {x + \gamma  + \delta  + 1} \right)_{k - 1} }}{{\left( {\alpha  + 1} \right)_k \left( {\beta  + \delta } \right)_k \left( {\gamma  + 1} \right)_k k!}}}   \label{racah-lhs1}\\ 
&&  \times \left[ {\left( {x + \gamma  + 1} \right)\left( {x + 1} \right)\left( {x + \gamma  + \delta  + k} \right) + \left( {x - \beta  + \gamma  + 1} \right)\left( {k - x - 1} \right)\left( {x + \gamma  + \delta } \right)} \right] \nonumber 
\end{eqnarray}
and
\begin{eqnarray}
&& R_n \left( {\lambda \left( x \right);\alpha ,\beta ,\gamma ,\delta  - 1} \right) - R_n \left( {\lambda \left( {x + 1} \right);\alpha ,\beta ,\gamma ,\delta  - 1} \right) \label{racah-lhs2} \\ 
&&  = \sum\limits_{k = 0}^n {\frac{{\left( { - n} \right)_k \left( {n + \alpha  + \beta  + 1} \right)_k \left( { - x} \right)_{k - 1} \left( {x + \gamma  + \delta + 1 } \right)_{k - 1} }}{{\left( {\alpha  + 1} \right)_k \left( {\beta  + \delta } \right)_k \left( {\gamma  + 1} \right)_k k!}}k\left( {2x + \gamma  + \delta  + 1} \right)} .\nonumber  
\end{eqnarray}
Therefore, combining (\ref{racah-lhs1}) and (\ref{racah-lhs2}), we have the following expression for the left hand side of~(\ref{racah-eq1}):
\begin{eqnarray}
&&\sum\limits_{k = 0}^n {\frac{{\left( { - n} \right)_k \left( {n + \alpha  + \beta  + 1} \right)_k \left( { - x} \right)_{k - 1} \left( {x + \gamma  + \delta  + 1} \right)_{k - 1} }}{{\left( {\alpha  + 1} \right)_k \left( {\beta  + \delta } \right)_k \left( {\gamma  + 1} \right)_k k!}}} \nonumber \\
&&\times \left[ {\beta \left( {k - x - 1} \right)\left( {x + \gamma  + \delta } \right) - k\left( {x + \gamma  + 1} \right)\left( {x + \beta  + \delta } \right)} \right].
\label{racah-lhs3}
\end{eqnarray}
Then, one can rewrite the right hand side of~(\ref{racah-eq1}) as follows:
\begin{eqnarray}
&& \frac{{\left( {n + \alpha  + 1} \right)\left( {n + \beta } \right)}}{{\alpha  + 1}}R_n \left( {\lambda \left( x \right);\alpha  + 1,\beta  - 1,\gamma ,\delta } \right)  \nonumber \\ 
&&  = \frac{{\left( {n + \alpha  + 1} \right)\left( {n + \beta } \right)}}{{\alpha  + 1}}\sum\limits_{k = 0}^n {\frac{{\left( { - n} \right)_k \left( {n + \alpha  + \beta  + 1} \right)_k \left( { - x} \right)_k \left( {x + \gamma  + \delta  + 1} \right)_k }}{{\left( {\alpha  + 2} \right)_k \left( {\beta  + \delta } \right)_k \left( {\gamma  + 1} \right)_k k!}}}   \nonumber \\ 
&&  = \sum\limits_{k = 0}^n {\frac{{\left( { - n} \right)_k \left( {n + \alpha  + \beta  + 1} \right)_k \left( { - x} \right)_k \left( {x + \gamma  + \delta  + 1} \right)_k }}{{\left( {\alpha  + 1} \right)_k \left( {\beta  + \delta } \right)_k \left( {\gamma  + 1} \right)_k k!}}\frac{{\left( {n + \alpha  + 1} \right)\left( {n + \beta } \right)}}{{\alpha  + k + 1}}} \\ 
&&  = \sum\limits_{k = 0}^n \frac{{\left( { - n} \right)_k \left( {n + \alpha  + \beta  + 1} \right)_k \left( { - x} \right)_k \left( {x + \gamma  + \delta  + 1} \right)_k }}{{\left( {\alpha  + 1} \right)_k \left( {\beta  + \delta } \right)_k \left( {\gamma  + 1} \right)_k k!}} \nonumber\\
&& \qquad\qquad \times \left[ {\frac{{\left( {n + \alpha  + \beta  + k + 1} \right)\left( {n - k} \right)}}{{\alpha  + k + 1}} + \left( {\beta  + k} \right)} \right]  \nonumber \\ 
&&  = \sum\limits_{k = 0}^n {\frac{{\left( { - n} \right)_k \left( {n + \alpha  + \beta  + 1} \right)_k \left( { - x} \right)_{k - 1} \left( {x + \gamma  + \delta  + 1} \right)_{k - 1} }}{{\left( {\alpha  + 1} \right)_k \left( {\beta  + \delta } \right)_k \left( {\gamma  + 1} \right)_k k!}}} \nonumber  \\ 
&& \qquad \times \left[ {\left( {\beta  + k} \right)\left( {k - x - 1} \right)\left( {x + \gamma  + \delta  + k} \right) - k\left( {\gamma  + k} \right)\left( {\beta  + \delta  + k - 1} \right)} \right]. \nonumber 
\end{eqnarray}
Now, to prove~(\ref{racah-eq1}), we just need to check that the following equality is correct:
\begin{eqnarray}
\label{racah-lhs4}
&& \beta \left( {k - x - 1} \right)\left( {x + \gamma  + \delta } \right) - k\left( {x + \gamma  + 1} \right)\left( {x + \beta  + \delta } \right) \nonumber \\ 
&&  = \left( {\beta  + k} \right)\left( {k - x - 1} \right)\left( {x + \gamma  + \delta  + k} \right) - k\left( {\gamma  + k} \right)\left( {\beta  + \delta  + k - 1} \right),
\end{eqnarray}
which is obvious. 

The proof of Eq.~(\ref{racah-eq2}) is even simpler than that of Eq.~(\ref{racah-eq1}). 
It is possible to rewrite~(\ref{racah-eq2}) it as follows:
\begin{eqnarray}
\label{racah-eq21}
&& \sum\limits_{k = 0}^n {\frac{{\left( { - n} \right)_k \left( {n + \alpha  + \beta  + 1} \right)_k \left( { - x} \right)_k \left( {x + \gamma  + \delta  + 2} \right)_k }}{{\left( {\alpha  + 1} \right)_k \left( {\beta  + \delta } \right)_k \left( {\gamma  + 1} \right)_k k!}}} \left[ {\left( {x - \alpha  + \gamma  + \delta } \right)\left( {k - x - 1} \right)} \right. \nonumber \\ 
&&  + \left. {\left( {x + \alpha  + 2} \right)\left( {x + \gamma  + \delta  + k + 1} \right) - \left( {2x + \gamma  + \delta  + 2} \right)\left( {\alpha  + k + 1} \right)} \right] = 0. 
\end{eqnarray}
Then (\ref{racah-eq2}) follows from the simple observation that
\begin{eqnarray}
&&\left( {x - \alpha  + \gamma  + \delta } \right)\left( {k - x - 1} \right) + \left( {x + \alpha  + 2} \right)\left( {x + \gamma  + \delta  + k + 1} \right) \nonumber\\
&&= \left( {2x + \gamma  + \delta  + 2} \right)\left( {\alpha  + k + 1} \right).
\end{eqnarray}
\qed
\end{proof}

There are three known cases, when the Racah polynomials $R_n \left( {\lambda \left( x \right);\alpha ,\beta ,\gamma ,\delta } \right)$ reduce to Hahn polynomials $Q_n \left( x;\alpha ,\beta ,m \right)$~\cite[(9.2.15)-(9.2.17)]{koekoek}, defined as
\begin{equation}
Q_n \left( {x;\alpha ,\beta ,m} \right) = {\kern 1pt} _3 F_2 \left( {\begin{array}{*{20}c}
   {\begin{array}{*{20}c}
   { - n,n + \alpha  + \beta  + 1, - x}  \\
   {\alpha  + 1, - m}  \\
\end{array};} & 1  \\
\end{array}} \right).
\end{equation}
For the first two cases, $\left( {\gamma  + 1 =  - m;\delta  \to \infty } \right)$ and $\left( {\delta  =  - \beta  - m - 1;\gamma  \to \infty } \right)$, one recovers a pair of known difference equations for the Hahn polynomials\\
$Q_n \left( x;\alpha+1 ,\beta-1 ,m \right)$ and $Q_n \left( x;\alpha ,\beta ,m \right)$~\cite[(10)-(11)]{stoilova}.
For the third case, ($\alpha  + 1 =  - m;$ $\beta  \to \beta  + \gamma  + m + 1;\delta  \to \infty $) leads to a pair of new difference equations for Hahn polynomials, with a shift in $m$:
\begin{eqnarray}
\label{hahn-eq}
&& \left( {x + 1} \right)Q_n \left( {x;\alpha ,\beta ,m - 1} \right) - \left( {x - m + 1} \right)Q_n \left( {x + 1;\alpha ,\beta ,m - 1} \right)  \nonumber \\
&&\qquad =  m \cdot Q_n \left( {x + 1;\alpha ,\beta ,m} \right), \\
&& m\left( {x - \beta  - m} \right)Q_n \left( {x;\alpha ,\beta ,m} \right) - m\left( {x + \alpha  + 1} \right)Q_n \left( {x + 1;\alpha ,\beta ,m} \right) \nonumber \\
&& \qquad = \left( {n - m} \right)\left( {n + \alpha  + \beta  + m + 1} \right)Q_n \left( {x;\alpha ,\beta ,m - 1} \right).
\end{eqnarray}
Under the limit ($\alpha=pt$;$\beta=(1-p)t$;$t \to \infty$), these equations further reduce to a pair of difference equations for the Krawtchouk polynomials $K_n \left( {x;p,m} \right)$:
\begin{eqnarray}
\label{krawt-eq}
&& \left( {x + 1} \right)K_n \left( {x;p,m - 1} \right) + \left( {m - x + 1} \right)K_n \left( {x + 1;p,m - 1} \right) = m \cdot K_n \left( {x + 1;p,m} \right),\nonumber \\ 
&& m\left( {1 - p} \right)K_n \left( {x;p,m} \right) + m \cdot p \cdot K_n \left( {x + 1;p,m} \right) = \left( {m - n} \right)K_n \left( {x;p,m - 1} \right). 
\end{eqnarray}

Eqs.~(\ref{racah-eq1}) and (\ref{racah-eq2}) can be useful for the construction of finite-discrete quantum oscillator models as well as exactly-solvable spin chains with nearest-neighbour interaction of $m+1$ fermions subject to a zero external magnetic field:
\begin{equation}
\hat H = \sum\limits_{k = 0}^{m - 1} {J_k \left( {a_k^ +  a_{k + 1}  + a_{k + 1}^ +  a_k } \right)} ,
\label{spin chain}
\end{equation}
where, $J_k$ expresses the coupling strength between two neighbour fermions $k$ and $k+1$ and has the following expression:
\begin{equation}
J_k  = \left\{ \begin{array}{l}
 \sqrt {\left( {k + 1} \right)\left( {m - k} \right)f\left( {\alpha ,\beta ,\delta } \right)} ;\qquad\qquad \textrm{k - odd} \\ 
 \sqrt {\left( {k + 2\alpha  + 2} \right)\left( {m - k + 2\beta } \right)g\left( \delta  \right)} ;\quad \textrm{k - even} \\ 
 \end{array} \right. 
\label{coupling}
\end{equation}
with $f\left( {\alpha ,\beta ,\delta } \right)$ and $g\left( \delta  \right)$ defined as follows:
\begin{equation}
f\left( {\alpha ,\beta ,\delta } \right) = \frac{{\left( {k - 2\alpha  + 2\delta  - m} \right)\left( {k + 2\beta  + 2\delta  - 1} \right)}}{{\left( {2k + 2\delta  - m - 1} \right)\left( {2k + 2\delta  - m + 1} \right)}},
\label{f}
\end{equation}
\begin{equation}
g\left( \delta  \right) = \frac{{\left( {k - m + 2\delta  - 1} \right)\left( {k + 2\delta } \right)}}{{\left( {2k + 2\delta  - m - 1} \right)\left( {2k + 2\delta  - m + 1} \right)}}.
\label{g}
\end{equation}

\section{Wilson polynomials as analytical solutions of new difference equations}

The Wilson polynomial $W_n \left( {x^2 ;a,b,c,d} \right)$ of degree $n$ ($n = 0,1, \ldots$) in the variable $x$ is defined by:
\begin{equation}
\label{wilson-def}
\frac{{W_n \left( {x^2 ;a,b,c,d} \right)}}{{\left( {a + b} \right)_n \left( {a + c} \right)_n \left( {a + d} \right)_n }} = {\kern 1pt} _4 F_3 \left( {\begin{array}{*{20}c}
   {\begin{array}{*{20}c}
   { - n,n + a + b + c + d - 1,a + ix,a - ix}  \\
   {a + b,a + c,a + d}  \\
\end{array};} & 1  \\
\end{array}} \right).
\end{equation}
The polynomial satisfies an orthogonality relation in the continuous space $\left[ {0,\left. { + \infty } \right)} \right.$ under the condition ${\mathop{\rm Re}\nolimits} \left( {a,b,c,d} \right) > 0$~\cite[(9.1.2)]{koekoek}.

By putting $\alpha  = a + b - 1$, $\beta  = c + d - 1$, $\gamma  = a + d - 1$, $\delta  = a - d$ and $x \to  - a + ix$ in Eqs. (\ref{racah-eq1}) and (\ref{racah-eq2}) as well as taking into account the duality of Racah polynomials (\ref{racah-def}) in $n$ and $x$, one can transfer them to Wilson polynomials and obtain the following three-term recurrence relations
\begin{eqnarray}
&& W_n \left( {x^2 ;a,b,c,d} \right) = \frac{{n + a + b + c + d - 1}}{{2n + a + b + c + d - 1}}W_n \left( {x^2 ;a,b,c,d + 1} \right) \nonumber \\
&& - \frac{{n(n + a + b - 1)(n + a + c - 1)(n + b + c - 1)}}{{2n + a + b + c + d - 1}}W_{n - 1} \left( {x^2 ;a,b,c,d + 1} \right), \\
&& \left( {x^2  + d^2 } \right)W_n \left( {x^2 ;a,b,c,d + 1} \right) = \nonumber \\
&& \frac{{(n + a + d)(n + b + d)(n + c + d)}}{{2n + a + b + c + d}}W_n \left( {x^2 ;a,b,c,d} \right) \nonumber \\ 
&& \qquad- \frac{1}{{2n + a + b + c + d}}W_{n + 1} \left( {x^2 ;a,b,c,d} \right), 
\end{eqnarray}
and the difference equations:
\begin{eqnarray}
&& \left[ {\frac{{\left( {a + ix} \right)\left( {b + ix} \right)}}{{2ix}}e^{ - {\textstyle{i \over 2}}\partial _x }  - \frac{{\left( {a - ix} \right)\left( {b - ix} \right)}}{{2ix}}e^{{\textstyle{i \over 2}}\partial _x } } \right]W_n \left( {x^2 ;a + {\textstyle{1 \over 2}},b + {\textstyle{1 \over 2}},c,d} \right) \nonumber \\ 
&&  \qquad = \left( {n + a + b} \right)W_n \left( {x^2 ;a,b,c + {\textstyle{1 \over 2}},d + {\textstyle{1 \over 2}}} \right), \label{dif-wil1}\\ 
&& \left[ {\frac{{\left( {c + ix} \right)\left( {d + ix} \right)}}{{2ix}}e^{ - {\textstyle{i \over 2}}\partial _x }  - \frac{{\left( {c - ix} \right)\left( {d - ix} \right)}}{{2ix}}e^{{\textstyle{i \over 2}}\partial _x } } \right]W_n \left( {x^2 ;a,b,c + {\textstyle{1 \over 2}},d + {\textstyle{1 \over 2}}} \right) \nonumber \\ 
&&  \qquad = \left( {n + c + d} \right)W_n \left( {x^2 ;a + {\textstyle{1 \over 2}},b + {\textstyle{1 \over 2}},c,d} \right). \label{dif-wil2}
\end{eqnarray}
Introducing orthonormalized Wilson polynomials, one can reformulate (\ref{dif-wil1}) and (\ref{dif-wil2}) in a more compact form:
\begin{eqnarray}
&& \left[ {e^{ - {\textstyle{i \over 2}}\partial _x }  - \frac{{\left( {a - ix} \right)\left( {b - ix} \right)}}{{2ix}}e^{{\textstyle{i \over 2}}\partial _x } \frac{{\left( {c + ix} \right)\left( {d + ix} \right)}}{{2ix}}} \right]\tilde W_n \left( {x^2 ;{\textstyle{1 \over 2}},0} \right) \nonumber\\
&&\qquad = \sqrt {\left( {n + a + b} \right)\left( {n + c + d} \right)} \tilde W_n \left( {x^2 ;0,{\textstyle{1 \over 2}}} \right), \nonumber \\ 
&& \left[ {e^{ - {\textstyle{i \over 2}}\partial _x }  - \frac{{\left( {c - ix} \right)\left( {d - ix} \right)}}{{2ix}}e^{{\textstyle{i \over 2}}\partial _x } \frac{{\left( {a + ix} \right)\left( {b + ix} \right)}}{{2ix}}} \right]\tilde W_n \left( {x^2 ;0,{\textstyle{1 \over 2}}} \right) \nonumber\\
&&\qquad = \sqrt {\left( {n + a + b} \right)\left( {n + c + d} \right)} \tilde W_n \left( {x^2 ;{\textstyle{1 \over 2}},0} \right), \label{orth-wil}
\end{eqnarray}
where $\tilde W_n \left( {x^2 ;{\textstyle{1 \over 2}},0} \right) \equiv \tilde W_n \left( {x^2 ;a + {\textstyle{1 \over 2}},b + {\textstyle{1 \over 2}},c,d} \right)$ and \\
$\tilde W_n \left( {x^2 ;0,{\textstyle{1 \over 2}}} \right) \equiv \tilde W_n \left( {x^2 ;a,b,c + {\textstyle{1 \over 2}},d + {\textstyle{1 \over 2}}} \right)$. 
As a special case, when $a=c$ and $b=d$, both (\ref{dif-wil1}) and (\ref{dif-wil2}) reduce to difference equations for the continuous dual Hahn polynomials $S_n \left( {4x^2 ;2a,2b,{\textstyle{1 \over 2}}} \right)$~\cite[(9.3.6)]{koekoek}. 
Then, they can be considered as a fermionic extension of the quantum harmonic oscillator model, whose algebra is Lie algebra $su(1,1)$ deformed by a reflection operator~\cite{jafarov1}. 
Under another limit, reducing Wilson polynomials to continuous Hahn polynomials~\cite[(9.1.17)]{koekoek}, one obtains from Eqs.~(\ref{dif-wil1}) and (\ref{dif-wil2}) a pair of difference equations for continuous Hahn polynomials
\begin{eqnarray}
\label{diff-ch1}
 \left[ {\left( {ix + b} \right)e^{ - {\textstyle{i \over 2}}\partial _x }  - \left( {ix - d} \right)e^{{\textstyle{i \over 2}}\partial _x } } \right]p_n \left( {x;0,{\textstyle{1 \over 2}}} \right) = \left( {n + b + d} \right)p_n \left( {x;{\textstyle{1 \over 2}},0} \right), \\ 
\label{diff-ch2}
 \left[ {\left( {ix + a} \right)e^{ - {\textstyle{i \over 2}}\partial _x }  - \left( {ix - c} \right)e^{{\textstyle{i \over 2}}\partial _x } } \right]p_n \left( {x;{\textstyle{1 \over 2}},0} \right) = \left( {n + a + c} \right)p_n \left( {x;0,{\textstyle{1 \over 2}}} \right),
\end{eqnarray}
where, $p_n \left( {x;0,{\textstyle{1 \over 2}}} \right) \equiv p_n \left( {x;a,b + {\textstyle{1 \over 2}},c,d + {\textstyle{1 \over 2}}} \right)$ and \\
$p_n \left( {x;{\textstyle{1 \over 2}},0} \right) \equiv p_n \left( {x;a + {\textstyle{1 \over 2}},b,c + {\textstyle{1 \over 2}},d} \right)$.

Surprisingly, both Eqs.~(\ref{diff-ch1}) and (\ref{diff-ch2}) generalize a difference equation, whose solution is the Meixner-Pollaczek polynomial~\cite[(9.7.5)]{koekoek}. 
Therefore, they can be considered as a fermionic extension of the $su(1,1)$ Meixner-Pollaczek oscillator~\cite{klimyk}.

\section{Conclusion}

Racah and Wilson polynomials, which occupy the top level in the Askey scheme of hypergeometric orthogonal polynomials, are defined through the $ _4 F_3$ type hypergeometric series. 
Under certain conditions, there is a well-known orthogonality relation for the Racah polynomials with respect to a discrete measure as well as for the Wilson polynomials with respect to a continuous measure. 
These polynomials are explicit analytical solutions of known difference equations with quadratic-like eigenvalues. 
In current work, we introduce a pair of novel difference equations or three-term recurrence relations, whose solutions are also expressed in terms of the Racah or Wilson polynomials depending on nature of the finite-difference step. 
The proof of these equations is presented for case of Racah polynomials. 
These equations may turn out to be good candidates for building some new fermionic oscillator models as well as exactly-solvable spin chains with a nearest-neighbour interaction. 
A number of special cases and limit relations are also examined, which allow to introduce similar difference equations for the orthogonal polynomials of the $ _3 F_2$ and $ _2 F_1$ types.

\begin{acknowledgement}
EIJ acknowledges support from Research Grant \textbf{EIF-2012-2(6)-39/08/1} of the Science Development Foundation under the President of the Republic of Azerbaijan.
\end{acknowledgement}

\biblstarthook{}

\end{document}